\definecolor{newcolor}{rgb}{.8,.349,.1}
\definecolor{brown}{cmyk}{0,0.81,1,0.60}
\definecolor{magenta}{rgb}{0.4,0.7,0}
\definecolor{gray}{rgb}{0.5,0.5,0.5}
\definecolor{red}{rgb}{1,0,0}
\definecolor{green}{rgb}{0,0.5,0}
\definecolor{blue}{rgb}{0,0,1}
\pgfplotsset{compat=1.18}
\newcommand{\adv}{\mathcal{A}}
\newcommand{\cpanet}{\texttt{IND-CPA-NOISE($\mathcal{E}$)}}
\newcommand{\experiment}{\texttt{qPKEnc}}
\newcommand{\dthree}{
    \begin{tikzpicture}[semithick]
        \begin{yquant}[register/minimum height=3mm, register/minimum depth=3mm]
            qubit {} q[3];

            swap (q[0,1]); 
            swap (q[0,2]);
            not q[0] | q[1-2];
            not q[1-2] | q[0];
            not q[0] | q[1-2];
            h q[0];
        \end{yquant}
    \end{tikzpicture}
}
\newcommand{\dfive}{
    \begin{tikzpicture}[semithick]
        \begin{yquant}[register/minimum height=3mm, register/minimum depth=3mm]
            qubit {} q[3];

            swap (q[0,1]); 
            swap (q[0,2]);
            not q[1] | q[0]; 
            h q[0];
        \end{yquant}
    \end{tikzpicture}
}
\newcommand{\dsix}{
    \begin{tikzpicture}[semithick]
        \begin{yquant}[register/minimum height=3mm, register/minimum depth=3mm]
            qubit {} q[3];

            swap (q[0,1]); 
            swap (q[0,2]);
            not q[2] | q[0]; 
            h q[0];
        \end{yquant}
    \end{tikzpicture}
}
\newcommand{\dseven}{
    \begin{tikzpicture}[semithick]
        \begin{yquant}[register/minimum height=3mm, register/minimum depth=3mm]
            qubit {} q[3];

            swap (q[0,1]); 
            not q[1] | q[0];
            swap (q[0,2]);
            not q[2] | q[0];
            not q[0] | q[1-2];
            not q[2] | q[0-1];
            h q[0];
        \end{yquant}
    \end{tikzpicture}
}
\newcommand{\prfcircuit}{
    \begin{tikzpicture}[semithick]
        \begin{yquant}[register/minimum height=4mm, register/minimum depth=4mm, operator/separation=4mm]
            qubit {} q[3];
            [name=kzero]  z q[0];
            [name=kone]   z q[1];
            [name=ktwo]   z q[2];
            [name=kthree] zz (q[-1]);
            [name=kfour]  zz (q[0,2]);
            [name=kfive]  zz (q[1-]);
            [name=ksix]   zz (q[-]);
        \end{yquant}
       
        \node[fit=(kzero-0), draw, dotted, rounded corners, thick, inner ysep=3pt, inner xsep=3pt]{};
        \node[fit=(kone-0), draw, dotted, rounded corners, thick, inner ysep=3pt, inner xsep=3pt]{};
        \node[fit=(ktwo-0), draw, dotted, rounded corners, thick, inner ysep=3pt, inner xsep=3pt]{};
        \node[anchor=north, xshift=-0.3cm, yshift=0.5cm] at (kzero.north west){$\ell_0$};
        \node[anchor=north, xshift=-0.3cm, yshift=0.5cm] at (kone.north west){$\ell_1$};
        \node[anchor=north, xshift=-0.3cm, yshift=0.5cm] at (ktwo.north west){$\ell_2$};
        
        \node[fit=(kthree-0), draw, dotted, rounded corners, thick, inner ysep=19pt, inner xsep=6pt, yshift=-0.45cm,  "$\ell_3$"]{};
        \node[fit=(kfour-0), draw, dotted, rounded corners, thick, inner ysep=32pt, inner xsep=6pt, yshift=-0.9cm,  "$\ell_4$"]{};
        \node[fit=(kfive-0), draw, dotted, rounded corners, thick, inner ysep=19pt, inner xsep=6pt, yshift=-0.45cm,  "$\ell_5$"]{};
        \node[fit=(ksix-0), draw, dotted, rounded corners, thick, inner ysep=32pt, inner xsep=6pt, yshift=-0.9cm,  "$\ell_6$"]{};
    \end{tikzpicture}
}
\newcommand{\indcpanoisegame}{
    \begin{tikzpicture}
        \draw (0,0) node[rectangle, minimum height=0.7cm, minimum width=2cm, draw] (pkgen) {$\texttt{PKGen}(k)$};
        \draw (0, -2) node[rectangle, minimum height=2cm, minimum width=2cm, draw] (adversary) {$\mathcal{A}$};

        \draw[<->] (adversary.north) -- (pkgen.south);
        
        \draw[->] (adversary.south) -- (0, -3.65);
        \draw (0, -3.95) node(guess) {$\text{guess}$};
        
        \draw (-4, -1.5) node(rpk) {$(r^*, \rho k^{*})$};
        \draw[->] (rpk.east) -- ([yshift=0.5cm]adversary.west);
        
        \draw[->] (adversary.west) -- node[midway, above]{$m_0, m_1$}([yshift=-0.5cm]rpk.east);

        \draw node[below=of rpk.south east, anchor=north east, yshift=0.65cm](pkencrpk) {$\texttt{PKEnc}((r^*, \widetilde{\rho k^*}), m_b)$};
        \draw[->] ([yshift=-1cm]rpk.east) -- node[midway, above]{$c$}([yshift=-0.5cm]adversary.west);
    \end{tikzpicture}
}
\newcommand{\pubkeyscheme}{
    \begin{tikzpicture}
        \draw (0, 0) node[rectangle, minimum height=0.7cm, minimum width=2cm, draw] (skgen) {$\texttt{SKGen}$};
        \draw (0, -1.5) node[rectangle, minimum height=0.7cm, minimum width=2cm, draw] (pkgen) {$\texttt{PKGen}$};
        \draw (0, -4.5) node[rectangle, minimum height=0.7cm, minimum width=2cm, draw] (pkenc) {$\texttt{PKEnc}$};

        \draw[->] (skgen.south) -- node[midway, right]{$k$} (pkgen.north);
        
        \draw (0, -3) node (pubkeys) {$(r^{(1)}, \ket{\psi_{k, r^{(1)}}}), (r^{(2)}, \ket{\psi_{k, r^{(2)}}}), \cdots, (r^{(c)}, \ket{\psi_{k, r^{(c)}}})$};

        \foreach \i in {2, 0, -2} {
            \draw[->] (pkgen.south) -- ([xshift=\i cm]pubkeys.north);
        }

        \draw[->] (pubkeys.south) -- (pkenc.north);

        \draw (-3, -4.5) node (message) {$m$};
        \draw (3, -4.5) node (ciphertext) {$c$};
        \draw [->] (message.east) -- (pkenc.west);
        \draw [->] (pkenc.east) -- (ciphertext.west); 
    \end{tikzpicture}
}
\newcommand{\iden}{\ensuremath \mathds{1}}
\newcommand{\prf}[1]{\ensuremath {F_k\left(r, g, #1\right)}}
\theoremstyle{definition} \newtheorem{define} {Definition} [section]
\newtheorem {theorem} {Theorem}
\newcommand{\fullversion}[2]{\ifthenelse{\boolean{fullversionflag}}{{#1}}{{#2}}}
\newcommand{\kb}[1]{\left[#1\right]}
\newcommand{\trd}[1]{\left|\left| #1 \right| \right|}
\newcommand{\Hmin}{H_\infty}
\newcommand{\up}[1]{^{({#1})}}
\begin{document}

\title{Quantum Public Key Encryption for NISQ Devices}

\author{
    \IEEEauthorblockN{Nishant Rodrigues}
    \IEEEauthorblockA{
        \textit{Microsoft Quantum} \\
        Redmond, WA USA \\
        nirodrigues@microsoft.com \orcidlink{0000-0001-8882-4677}
    }
    \and
    \IEEEauthorblockN{Walter O. Krawec}
    \IEEEauthorblockA{
        \textit{University of Connecticut} \\
        Storrs, CT USA \\
        walter.krawec@uconn.edu
    }    
    \and
    \IEEEauthorblockN{Brad Lackey}
    \IEEEauthorblockA{\textit{Microsoft Quantum} \\
    Redmond, WA USA \\
    brad.lackey@microsoft.com\orcidlink{0000-0002-3823-8757}
    }
    \and
    \IEEEauthorblockN{Deb Mukhopadhyay}
    \IEEEauthorblockA{\textit{Microsoft Security} \\
    Redmond, WA USA \\
    debasish.mukhopadhyay@microsoft.com
    }    
    \and
    \IEEEauthorblockN{Bing Wang}
    \IEEEauthorblockA{
        \textit{University of Connecticut} \\
        Storrs, CT USA \\
        bing@uconn.edu
    }    
}

\maketitle

\begin{abstract}
Quantum public-key encryption (PKE), where public-keys and/or ciphertexts can be quantum states, is an important primitive in quantum cryptography. Unlike classical PKE (e.g., RSA or ECC), quantum PKE can leverage quantum-secure cryptographic assumptions or the principles of quantum mechanics for security. It has great potential in providing for secure cryptographic systems under potentially weaker assumptions than is possible classically. In addition, it is of both practical and theoretical interest, opening the door to novel cryptographic systems not possible with classical
information alone. While multiple quantum PKE schemes have been proposed, they require a large number of qubits acting coherently, and are not practical on current noisy quantum devices. In this paper, we design a practical quantum PKE
scheme, taking into account the constraints of current NISQ devices. Specifically, we design a PKE scheme with quantum-classical public keys and classical ciphertexts, that
is noise-resilient and only requires a small number of qubits acting coherently. In addition, our design provides tradeoffs in terms of efficiency
and the number of qubits that are required.
\end{abstract}

\section{Introduction}


Public Key Encryption (PKE) is ubiquitous in our society, allowing for the secure encryption of messages using a \emph{pubic key} (which multiple parties may hold copies of, including an adversary), in such a way that only the \emph{secret key} holder may decrypt them. However, currently its security relies on strong computational hardness assumptions.
In contrast,
quantum PKE, where public-keys and/or ciphertexts can be quantum states,  leverages quantum-secure cryptographic assumptions or the principles of quantum mechanics for security. 

Recent studies~\cite{Morimae2022,Barooti2023,Grilo2023,Barooti24:qPKE,coladangelo2023quantum} have shown drastically different results in  classical and quantum regimes regarding PKE. 
In the classical regime, based on Impagliazzo’s ``five worlds of complexity''~\cite{Impagliazzo1995},  PKE and one-way function (OWF) exist in two distinct worlds, specifically, Cryptomania and MiniCrypt, respectively. 
We are not sure if we live in the weaker  world of MiniCrypt, a place where OWFs exist but PKE systems are insecure, or the much stronger world of Cryptomania, where PKE systems exist. That is, if it turns out that we are in MiniCrypt, all our PKE systems (including widely used PKE schemes such as RSA and ECC, and even the post-quantum ones) are insecure.
In the quantum regime, on the contrary, 
these studies have shown that PKE schemes can be realized using only OWFs, \emph{or even weaker assumptions}.  Thus, quantum PKE may exist even if classical ones do not!
Clearly there is great potential in quantum PKE systems, as they will allow for public key systems to continue to operate even if our post-quantum assumptions fail. Beyond this, they also hold numerous, and fascinating, practical and theoretical interest, opening the door to novel cryptographic systems not possible with classical information alone. Furthermore, their construction, and proof of security, can hold broad benefit to the field of cryptography in general.

By now there are several interesting constructions for quantum PKE systems, ranging from work which had information theoretic security assuming the public key could be distributed securely \cite{nikolopoulos2008applications,vlachou2015quantum}, to newer works with security based on quantum one way functions or quantum pseudorandom function-like states (see \cite{Barooti2023,Grilo2023,ananth2022cryptography,Barooti24:qPKE} just to list a few).  Existing quantum PKE schemes can generally be classified into multiple categories, depending on whether the public key and ciphertext are quantum or classical. 
In the first category, the public key is quantum, while the ciphertext is classical~\cite{Barooti2023,Grilo2023,Barooti24:qPKE}. In the second category, both public key and ciphertext are classical, while the algorithm that is used for encryption is quantum~\cite{Okamoto2000}.  
In the third category, both public keys and ciphertext are quantum~\cite{Gottesman2005,Kawachi2005,coladangelo2023quantum}.
All existing quantum PKE schemes, to our knowledge, however, suffer from two major problems. First, the quantum public key must consist of a large number (typically thousands) of qubits operating together, coherently. Second, current day systems are highly intolerant of any noise in the public key and would require quantum error correction to operate successfully. 

In this paper, we focus on designing practical quantum PKE schemes using quantum-secure pseudorandom functions, taking into account the constraints on current NISQ (Noisy intermediate-scale quantum) devices.
Specifically, 
we focus on quantum PKE schemes
where the public key is classical-quantum, and the ciphertext is classical. This allows the ciphertext to be stored on classical devices, and the storage can be for a long time, unlike a quantum ciphertext, which can only be stored for a short amount of time due to decoherence of quantum memory.
In addition, considering that current quantum processors are noisy and only have a small number of qubits, we design a PKE scheme that is noise-resilient and only requires a small number of qubits acting coherently together (we require a large number of qubits, however they can be grouped into practical ``chunks'' and run sequentially or in parallel).  Furthermore, our scheme supports the use of classical error correction on the ciphertext, while maintaining security, allowing the protocol to operate even if the quantum public key is noisy.  To our knowledge this is the first PKE scheme using Pseudo Random Functions (PRFs) that can support this. In addition, our design provides  trade-offs in terms of efficiency and the number of qubits that used in the public key so as quantum computers become more reliable, one may create public keys with a larger number of qubits in superposition, resulting actually in an overall smaller public key. We design and implement our PKE schemes on current quantum computers, and evaluate the error rate, and the efficiency of the schemes. 


Specifically, our work makes the following contributions:
\begin{itemize}
    \item {\em Quantum PKE protocol.} Inspired by the Round Robin quantum key distribution protocol~\cite{sasaki2014practical} and a recent ``trap door'' quantum encryption protocol~\cite{coladangelo2023quantum}, we construct a novel quantum PKE protocol that is suitable for current NISQ devices. Specifically, instead of relying on keeping a large number of qubits in a coherent superposition state, our protocol only relies on multiple ``groups'' of a small number of qubits remaining in a superposition. In addition, these groups do not have to be created all at once, instead, they can be created sequentially and ``streamed'' to users. Last, our construction allows for errors in the quantum state and the use of basic classical error correction, instead of quantum error correction.  To our knowledge, this is the first quantum PKE system which utilizes PRFs that can be implemented, securely, using today's NISQ devices.
    \item {\em Security Proof.}    We present a rigorous security analysis of the protocol.  Our proof combines traditional cryptographic hybrid arguments with QKD style proof techniques (e.g., bounds on quantum min entropy).  This shows further evidence that QKD security analyses and techniques can have broad impact outside of basic key-distribution, towards the design and security analyses of novel quantum cryptographic protocols (such as PKE in our case).  We also define security for the case where public keys are noisy (which hasn't been considered in recent PKE literature).

    \item {\em Circuit design and Implementation.} We design and implement our proposed PKE protocol by designing suitable circuits to run the public key generation algorithm and the encryption algorithm (decryption is purely classical).  We test our circuits on the Quantinuum H1-1 computer, reporting on the error rates observed, along with the required public key size needed to encrypt a message under such scenarios.  We also release our source code for this experiment as open source \cite{sourcecode}.
    \end{itemize}

\section{Preliminaries}\label{sec:background}

Let $q \in \{1,\cdots, d\}^N$ be some word.  We write $q_i$ to mean the $i$th character of $q$.  We use $H(A)$ to mean the Shannon entropy of random variable $A$ while we write $H(x)$, for $x\in[0,1]$ to mean the binary Shannon entropy function, namely $H(x) = -x\log_2x - (1-x)\log_2(1-x)$.  Note, later, we will use a lower-case $h(\cdot)$ to mean a hash function.

Let $\rho_{AE}$ be a classical-quantum state, where the $A$ register consists of $N$-bits.  The \emph{quantum min entropy}, denoted $\Hmin(A|E)_\rho$ was defined in \cite{renner2008security} and is related to the maximal guessing probability of Eve (who owns the $E$ system) guessing the value of the $A$ system \cite{konig2009operational}.  The \emph{smooth quantum min entropy} \cite{renner2008security} is denoted $\Hmin^\epsilon(A|E)_\rho$, and is defined to be $\Hmin^\epsilon(A|E)_\rho = \sup_\sigma \Hmin(A|E)_\sigma$, where the supremum is over all density operators $\sigma$ such that $\trd{\rho-\sigma}\le \epsilon$.

Consider the privacy amplification process: namely, a random two-universal hash function $h:\{0,1\}^N\rightarrow\{0,1\}^\ell$ is chosen and the $A$ register is sent through the given hash, yielding a new cq-state $\sigma_{KE}$ where the $K$ register is $\ell$-bits long (the result of mapping the $A$ register through $H$) and the new $E$ register contains a description of the hash function chosen.  Then, it was shown in \cite{renner2008security}, that:
\begin{equation}\label{eq:intro:PA}
\trd{\sigma_{KE} - 2^{-\ell}I_K\otimes\sigma_E} \le 2^{-\frac{1}{2}(\Hmin^\epsilon(A|E)_\rho - \ell)} + 2\epsilon,
\end{equation}
Essentially, the above ensures that, so long as the min entropy is ``high'', the final output after mapping the $A$ register through a two-universal hash function, will be close, in trace distance, to an ideal key: one that is chosen uniformly at random and is independent of any other system (in particular the $E$ system controlled by the adversary).  Such an ideal key could be used as a one time pad (OTP) encryption key for instance.

\subsection{Quantum Secure Pseudo Random Functions (PRFs)}\label{sec:PRF}
Our protocol will make use of quantum secure PRFs.
A PRF is a keyed function $F_k$ mapping $N$ bit strings to $n$ bit strings.  We assume, throughout, that the key is of size $\lambda$-bits where $\lambda$ will be our security parameter.  In general, a PRF also consists of a secret key generation function which produces a valid secret key $k$ for the PRF family.  To define the security of a PRF, we need to consider the attack model. Classically, we consider an adversary that is allowed to make classical queries to $F_k$ and the responses are indistinguishable (from a computational point of view) from the responses produced by simply choosing an arbitrary function $f:\{0,1\}^N\rightarrow \{0,1\}^n$.  See \cite{goldreich1986construct}.

In the quantum setting, however, the adversary is allowed to make superposition queries.  In particular, consider a quantum polynomial time algorithm $\adv$.  We write $\adv^{F_k}$ to mean an algorithm with oracle access to $F_k$ in superposition.  Namely, $\adv$ is allowed to evaluate the unitary operator $U_{F_k}\ket{x,y} = \ket{x, F_k(x)\oplus y}$.  Of course, $\adv$ is not given the circuit for $U_{F_k}$; instead the adversary is allowed to execute the unitary in a black-box fashion.  The adversary may, of course, also evaluate the unitary on a superposition (called a \emph{superposition query}).  Compare that with $\adv^f$ which is the same adversary algorithm, but now which is able to evaluate the unitary $U_f\ket{x,y} = \ket{x, f(x)\oplus y}$.  The idea behind a quantum secure PRF is that the adversary will be unable to distinguish the case when it is interacting with $U_{F_k}$ as opposed to $U_f$, for randomly chosen $k$ or $f$. See \cite{zhandry2021construct} for more details.

Formally, a PRF is said to be \emph{quantum secure} if the following holds:
\begin{define}\label{def:intro:PRF}
  (From \cite{zhandry2021construct}, see also \cite{ananth2022pseudorandom}): Let $F_k:\{0,1\}^N\rightarrow \{0,1\}^n$ be a keyed function with key-size $\lambda$.  Then, $F_k$ is a quantum secure PRF if for every quantum polynomial time algorithm $\adv$ which outputs a single classical bit, there exists a negligible function $\nu$ such that:
  \begin{equation}
    |\Pr(\adv^{F_k}=1) - \Pr(\adv^{f}=1)| \le \nu(\lambda).
  \end{equation}
  Above, the first probability is over all choices of key $k$ and any randomness inside $\adv$ while the second probability is over the choice of all possible functions $f:\{0,1\}^N\rightarrow\{0,1\}^n$ along with any randomness in $\adv$.  Note that the adversary is allowed superposition queries to the oracle function as discussed above.
\end{define}

\subsection{Round-Robin Quantum Key Distribution (RR-QKD) Protocol}\label{sec:RRQKD}
Our protocol will also make use, as a backbone, a particular QKD protocol known as the \emph{Round-Robin protocol} first introduced in~\cite{sasaki2014practical}.  This protocol is interesting as it allows one to bound an adversary's information gain based solely on the dimension of the system, $d$, as opposed to bounding this based on the amount of noise in the channel.  Whenever $d \ge 3$, the system can produce a secure secret key (when $d=2$, the protocol is actually insecure).  We discuss the protocol here only briefly, followed by an important security result we will use later.

The protocol works with $d$-dimensional qudits over multiple rounds.  For each round $1, 2, \cdots, N$, Alice will choose a random bit string $s_1, \cdots, s_d$ and create the quantum state:
\begin{equation}\label{eq:intro:init-state-rr}
  \ket{\psi_s} = \frac{1}{\sqrt{d}}\sum_{t=1}^{d}(-1)^{s_t}\ket{t}.
\end{equation}
Note that for each round, Alice chooses a new $s_1$ through $s_d$ uniformly and independently at random.  Normally, for a QKD system, such a state can be created by a single photon using time-bin encoding \cite{sasaki2014practical}. However, for our work, we will produce such states on a NISQ device where the summation will be over bit strings and we will have $d=2^n$ for some, relatively small, $n$.  Such a state can be readily created using basic quantum gates.  We will discuss our implementation details, later, in Section~\ref{sec:design}.

Returning to the QKD protocol, the above state is sent to Bob who, on receipt of the signal, will choose a $\delta \in \{1, \cdots, d-1\}$ and perform a measurement which returns two data points: first, an index $i\in\{0, \cdots, d-1\}$, and second a bit $w\in\{0,1\}$.  In the noise-less case, it should hold that $i$ is uniform random, and $w = s_i\oplus s_j$ for some $j$ such that $j-i \equiv \delta \text{ mod } d$.  The value of $j$ can be determined from the value of $i$ and $\delta$.  For specific details on how this measurement is implemented in the QKD protocol, the reader is referred to~\cite{sasaki2014practical}.  We will discuss our implementation of this measurement later in Section~\ref{sec:design}.

Bob will send to Alice the value of $i$ and $j$ for each round, keeping the value of $w$ secret.  This will allow Alice and Bob to share a correlated raw key bit for that round.  This repeats until a sufficient number of raw key bits have been distilled.  We will use $W$ to denote the random variable storing Bob's value of $w$ and $W'$ to be Alice's value.

After this, an error correction protocol is run in order to correct for errors in the string $w$ held by Alice and Bob.  For this, one needs an upper bound on the number of bit-flip errors expected in these two random strings.  Note, however, they do not need to actively observe the error in alternative bases to bound Eve's information, as is needed in other QKD protocols.  Following error correction, privacy amplification is run, which involves Alice (or Bob) choosing a random two-universal hash function $h:\{0,1\}^N\rightarrow \{0,1\}^\ell$ (where $N$ is the bit-size of $w$) and computing $h(w)$.  This will be their secret key.

The following theorem will be useful later in our proof, which bounds the quantum min entropy given a single round of the system:
\begin{theorem}\label{thm:intro:RR-rate}
  (From \cite{skoric2017short}): Assume an adversary is given the state $\ket{\psi_s}$ (shown in Equation (\ref{eq:intro:init-state-rr})) and Bob is given any arbitrary state produced by the adversary.  Alice and Bob run the RR-QKD protocol for this one round, resulting in random variables $I$, $J$, and $W'$ (the latter of which is Alice's secret value $s_i\oplus s_j$).  Eve has a quantum ancilla $E$ and is given copies of $I$ and $J$.  Then, it holds that:
  \begin{equation}
    \Hmin(W'|IJE) = 1 - \log_2\left(1+ \frac{2}{d}\right).
  \end{equation}
\end{theorem}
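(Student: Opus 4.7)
The plan is to use the operational characterization of quantum min-entropy, $\Hmin(W'|IJE) = -\log_2 p_{\mathrm{guess}}(W'|IJE)$, and compute the optimal guessing probability directly. The stated value $1 - \log_2(1+2/d)$ corresponds exactly to $p_{\mathrm{guess}} = \tfrac{1}{2}(1+2/d)$. I would model Eve's most general attack as an isometry $V : A \otimes E_0 \to B \otimes E$ applied to $\ket{\psi_s}_A \otimes \ket{0}_{E_0}$, with $B$ forwarded to Bob (who applies his round-robin measurement and announces $(I, J)$) and $E$ retained by Eve.

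For fixed $(i, j)$, the bit $W' = s_i \oplus s_j$ is uniform by the symmetry $s \mapsto s \oplus e_i$ acting on the uniform input distribution, so by the Helstrom bound Eve's optimal conditional guessing probability equals $\tfrac{1}{2} + \tfrac{1}{4}\trd{\rho_E^{ij,0} - \rho_E^{ij,1}}$, where $\rho_E^{ij,w}$ denotes Eve's averaged state conditional on $(W', I, J) = (w, i, j)$. The theorem is therefore equivalent to the uniform trace-distance bound $\trd{\rho_E^{ij,0} - \rho_E^{ij,1}} \le 4/d$ for every $(i, j)$ and every isometry $V$, together with a matching attack.

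To establish this bound, I would exploit the algebraic structure of the round-robin code $\{\ket{\psi_s}\}_s$: flipping the $i$-th bit of $s$ implements the unitary $Z_i = \iden - 2\ket{i}\bra{i}$, giving $\ket{\psi_{s \oplus e_i}} = Z_i \ket{\psi_s}$ with overlap $\braket{\psi_s | \psi_{s \oplus e_i}} = 1 - 2/d$. A single-bit flip at position $i$ interchanges the classes $W' = 0$ and $W' = 1$ on the input side, so by monotonicity of trace distance under the CPTP map $\rho \mapsto \mathrm{Tr}_B(V\rho V^\dagger)$, one reduces the problem to a trace-distance computation on \emph{input} mixtures that differ by a single $Z_i$-flip, averaged over the remaining $d - 2$ bits of $s$. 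Finally, tightness would be verified by exhibiting an explicit Eve strategy (for instance, simulating Bob's measurement with a random guess $\delta'$ and forwarding a reconstructed RR state, or performing an optimal asymmetric clone of $\ket{\psi_s}$) whose guessing probability matches $\tfrac{1}{2}(1 + 2/d)$.

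The main obstacle will be pinning down the sharp constant in the trace-distance bound. A naive pairwise estimate gives only $\trd{\rho_E^{s} - \rho_E^{s \oplus e_i}} \le 2\sqrt{1 - (1-2/d)^2} = O(1/\sqrt{d})$, which is off by a square root from the needed $O(1/d)$. The missing contraction must come from coherent averaging over the $2^{d-2}$ configurations of the bits outside positions $i$ and $j$, which amplifies indistinguishability at the ensemble level beyond what pairwise fidelity sees. The cleanest route to the exact value $4/d$ is probably via the SDP formulation of the guessing probability and its dual, or via a symmetrization argument exploiting the transitive action of the group $\{Z_t\}_{t=1}^{d}$ on the RR code, which should reduce any optimal Eve strategy to one whose conditional advantage is governed by a single scalar overlap.
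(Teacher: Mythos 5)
Your framework is the right one and matches the strategy of the cited proof (the paper's own ``proof'' of this theorem is a pointer to \cite{skoric2017short}, which proceeds exactly by writing $\Hmin(W'|IJE) = -\log_2 p_{\mathrm{guess}}$ and finding the eigenvalues of the resulting conditional density operators). The reduction to the Helstrom bound, the target $p_{\mathrm{guess}} = \tfrac12(1+2/d)$, and the use of monotonicity of trace distance to absorb Eve's isometry are all correct. The genuine gap is that you stop at the step that constitutes essentially the entire proof: establishing $\trd{\rho^{ij,0}-\rho^{ij,1}} = 4/d$. You declare this the ``main obstacle,'' correctly observe that pairwise fidelity estimates lose a square root, and then defer to an SDP dual or a group-symmetrization argument without carrying either out. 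A proof that ends with ``the cleanest route is probably via\dots'' has not established the sharp constant, and the sharp constant is the whole content of the theorem: an $O(1/\sqrt{d})$ bound would give a qualitatively weaker (and for the paper's key-rate formula, useless) statement.

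The missing step is an elementary computation on the averaged ensemble states rather than on individual pairs. Writing $\ket{\psi_s}\bra{\psi_s} = \tfrac1d\sum_{t,t'}(-1)^{s_t+s_{t'}}\ket{t}\bra{t'}$ and averaging over the $2^{d-1}$ strings $s$ with $s_i\oplus s_j = w$, every phase $(-1)^{s_t+s_{t'}}$ averages to zero except on the diagonal and for $\{t,t'\}=\{i,j\}$, where it equals $(-1)^w$. Hence
\begin{equation*}
\rho^{ij,w} = \frac{1}{d}\Bigl(\iden + (-1)^w\bigl(\ket{i}\bra{j}+\ket{j}\bra{i}\bigr)\Bigr),
\qquad
\rho^{ij,0}-\rho^{ij,1} = \frac{2}{d}\bigl(\ket{i}\bra{j}+\ket{j}\bra{i}\bigr),
\end{equation*}
whose nonzero eigenvalues are $\pm 2/d$, so the trace norm is exactly $4/d$ and $p_{\mathrm{guess}} = \tfrac12 + \tfrac14\cdot\tfrac4d$. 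This is precisely the ``coherent averaging'' you suspect must supply the contraction; it requires no duality or symmetrization, only diagonalizing the difference of the two \emph{mixed} conditional states instead of bounding it state-by-state. Monotonicity under Eve's channel then gives $\le 4/d$ for any attack, and equality is attained by the trivial attack in which Eve keeps $\ket{\psi_s}$ intact and forwards an uncorrelated state to Bob (so $(I,J)$ is independent of $s$), making your proposed asymmetric-cloning attack unnecessary. One further point to make explicit in a full write-up: conditioning on Bob's announced $(i,j)$ is a post-selection on the forwarded register, so you must argue that $\Pr(I=i,J=j\mid s)$ does not bias the two classes $s_i\oplus s_j=0,1$ differently; this is where the hypothesis that Bob receives an arbitrary adversary-prepared state has to enter the argument.
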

\begin{proof}
  The proof can be found in \cite{skoric2017short} and is based on properties of min entropy, and finding the eigenvalues of the resulting density operator for one round of the protocol.
\end{proof}
Note that, when $d=2$ (i.e., each round uses a single qubit), the protocol is insecure.  This is not difficult to see as in this case, Alice is essentially encoding her key value into orthogonal states.  However, as soon as $d > 2$, security is possible.  The interesting thing about this protocol is that Alice and Bob don't have to measure the signal noise to know how much information Eve has.  They only need to know the final bit-flip error rate (in their values $W$ and $W'$) for error correction -- this knowledge is not needed to bound Eve's uncertainty, which is in stark contrast to all other QKD protocols which generally require monitoring the noise in multiple bases to bound $\Hmin(A|E)$.

\section{Quantum PKE Protocol}\label{sec:protocol}
Our protocol takes inspiration from the Round-Robin QKD protocol~\cite{sasaki2014practical} discussed in Section~\ref{sec:background}, and a recent ``trap door'' quantum encryption protocol from~\cite{coladangelo2023quantum}.  It is illustrated at a high level in Figure \ref{fig:protocol:high-level} and in more detail in \figurename~\ref{fig:protocol_enc_dec}. As we shall see, our protocol can encrypt messages of size $\ell$, where $\ell$ can be arbitrarily large, but fixed.  Let $F$ be a quantum secure PRF according to Definition~\ref{def:intro:PRF} with security parameter $\lambda$, which maps inputs of size $p + \lceil\log_2 N\rceil + n$ to single bit outputs, where $N$ is to be determined and will be a function of $\ell$, and $n$ will be a parameter that the users may optimize over.  The value of $p$ may be a function of $\lambda$ and will relate to the probability of failure in the event two public keys have the same classical tag $r$.  We may write $F_k(r, g, x)$, where $r \in\{0,1\}^p$, $g \in \{0,1\}^{\lceil\log_2N\rceil}$, and $x \in \{0,1\}^n$.  The protocol is defined as follows:
\begin{itemize}%
  \item \texttt{SKGen}: will choose a random secret key $k$ using the key generation function of the given PRF.
  \item \texttt{PKGen}: will choose a random $p$-bit string $r$ and output the public key $(r, \ket{\psi_{k,r}})$, where:
    \begin{equation}\label{eq:protocol:public_key}
      \ket{\psi_{k,r}} = \bigotimes_{g=1}^N \frac{1}{\sqrt{2^n}}\left(\sum_{x\in\{0,1\}^n}(-1)^{F_k(r,g,x)}\ket{x}\right).
    \end{equation}
    The above state can be created efficiently.
    For the above, we refer to a \emph{group} as one of the particular states; namely group $g$ is $\frac{1}{\sqrt{2^n}}\left(\sum_{x\in\{0,1\}^n}(-1)^{F_k(r,g,x)}\ket{x}\right)$.  Note that each individual group is separable from the others.  In general, we will see that $n$ can be small (three to five qubits), while $N$ must be large.  This makes our system practical on today's NISQ devices.
  \item \texttt{Enc}: will take in a public key $(r, \ket{\psi_{k,r}})$ and a message $m\in\{0,1\}^\ell$ and perform the following steps:
    \begin{enumerate}%
    \item A random string $\delta \in \{1, 2, \cdots, 2^{n}-1\}^N$ is chosen, with $\Pr(\delta_g = y) = 1/(2^{n}-1)$, where $\delta_g$ is the value for the $g$th group.
    \item For each group $g = 1, \cdots, N$, a measurement is performed using the RR-QKD POVM \cite{sasaki2014practical} $\{M^{\delta_g}_{i,w}\}$ where:
      \begin{equation}\label{eq:protocol:enc_measurement}
        M^{\delta_g}_{i,w} = \frac{1}{2}P\left(\ket{i} + (-1)^w\ket{j}\right),
      \end{equation}
      where $P(\ket{z}) = \ket{z}\bra{z}$, and where, above, $j - i = \delta_g \pmod{2^n}$.  Note that we assume $j$ can be determined from the measurement outcome;  we shall show a possible implementation of this measurement in Section~\ref{sec:design}.  Note that an observation of $(i,w)$ will occur only if the relative phase difference between $\ket{i}$ and $\ket{j}$ in the $g$-th group is equal to $(-1)^w$.  Ideally, it should hold that $F_k(r,g,i) \oplus F_k(r,g,j) = w$. However, errors in the public key may cause $w$ to be random on some groups.  We call $w$ the \emph{measured pad value}.
    \item Let $i,j,w$ now be the resulting measurement strings after performing all measurements on all groups using the above POVM, that is, $i,j\in \{0, \cdots, 2^n-1\}^N$ and $w \in \{0,1\}^N$.  A random two universal hash function $h$ is chosen such that $h:\{0,1\}^N \rightarrow \{0,1\}^\ell$. Finally, the ciphertext, encrypting message $m$, is output:
      \begin{equation}
        c = (r, i, j, h, h(w)\oplus m)
      \end{equation}
    \end{enumerate}
  \item \texttt{Dec}: will take the secret key and a cipher text $c$ and compute $w' = w'_1\cdots w'_N$ where $w'_g = F_k(r, g, i_g) \oplus F_k(r,g,j_g)$.  The message, then, can be recovered  by passing $w'$ through the hash function $h$ (provided in the ciphertext) and XOR'ing with the last component of the ciphertext.  We call $w'$ the \emph{computed pad value}.
  \end{itemize}

  \begin{figure}
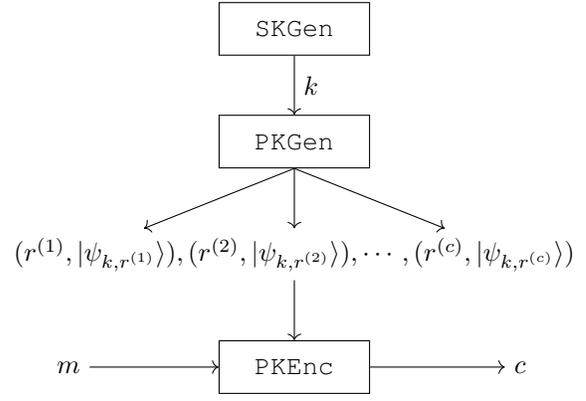

    \centering
    \pubkeyscheme 
    \caption{A high level diagram showing how the PKE scheme is used.  First a secret key is produced using \texttt{SKGen}.  From this multiple copies of the quantum public key are created using \texttt{PKGen}.  Any one of these can be used to encrypt a message.}\label{fig:protocol:high-level}
  \end{figure}

  Note that if there are errors in the public key, it will not be the case that $w = w'$, where $w$ is the measurement result during encryption and $w'$ is the value computed at decryption based on the PRF evaluation.  However, it should hold that if the public key error is ``low enough'', $w$ and $w'$ should be highly correlated.  
  In fact, our construction supports the additional step of error correction on the string $w$.  Namely, if it is known that the number of errors between $w$ and $w'$ are no larger than $Q$, the encryptor may add a forward error correcting code (e.g., an LDPC code~\cite{Elkouss2009:LDPC,MinkNakassis2012:LDPC}), allowing the decryptor to recover $w$ given $w'$ and the codeword.  This code can be placed in the ciphertext; note that doing so will ``leak'' information to Eve, however the two-universal hash function can ensure secrecy.  Note also that error correction can be done interactively during decryption (e.g., using Cascade protocol~\cite{BrassardSalvail1994:Cascade}). However that would require the encryptor to also be active at the time of decryption, which may not be suitable in applications.

  It is clear that our proposed protocol is based on the Round-Robin (RR) QKD protocol.  To relate the two, the ``Groups'' are individual rounds of the RR protocol, while the value of $2^n$ is the dimension of the quantum state used for each round ($d = 2^n$).  Later, in our circuit design and implementation (Section~\ref{sec:design}), we will use $n = 3$ which provides a balance of efficiency and practicality, however we will prove security for any $n > 1$ in Section~\ref{sec:security}.  Finally, $p$ will be related to the overall probability of failure based on how many public keys are created.  We will set $p$ to be large in order to ensure a negligible failure probability (e.g., $p=256$).  Finally, $N$, the number of groups, can be set based on the message size, the dimension of each group $n$, the expected noise, and the security parameter as we prove in our Theorem~\ref{thm:security} below.

  Overall, our protocol has several advantages over prior quantum public key states.  Our protocol does not rely on keeping a large number of qubits in a coherent superposition state.  Instead, it relies on multiple ``groups'' of a small number of qubits remaining in a superposition.  Furthermore, these groups do not have to be created at once if the parties wish to encrypt larger messages (the public key can be ``streamed'' to users).  Finally, our construction allows for errors in the quantum state and the use of basic classical error correction, as opposed to the need for quantum error correction, necessary in all other quantum public key encryption systems that we are aware of.  This makes our construction the first quantum public key system, to our knowledge, that can be implemented in a secure manner using today's quantum computers.  In Section~\ref{sec:eval}, we demonstrate this by testing our system on Quantinuum's quantum computers.   Note that our work is not simply a trivial application of PRFs to the RR-QKD protocol.  We must derive a suitable security definition for noisy public keys, and prove security in the event multiple copies of a public key are available to the adversary (which is not a requirement in the RR-QKD protocol). Finally, we must also implement circuits to perform the needed measurements on a NISQ device.
  

\begin{figure*}[!t]
    \centering
    \procb[colspace=-2cm]{}{
            \textbf{Alice} \> \< \textbf{Bob} \\
            k \leftarrow \texttt{SKGen}(\lambda) \> \< \\
            r, \ket{\psi_{k,r}} = \bigotimes_{g=1}^N \frac{1}{\sqrt{2^n}}\left(\sum_{x\in\{0,1\}^n}(-1)^{\prf{x}}\ket{x}\right)\leftarrow \texttt{PKGen}(k) \>\< \\
            \> \sendmessageright*[6cm]{(r, \ket{\psi_{k,r}})} \< \\
            \> \< m \in \{0, 1\}^l \\
            \> \< \delta \leftarrow \{1, \cdots, 2^{n}-1\}^N\\
            \> \< i, j, w \leftarrow \mathsf{Measure}(\ket{\psi_{k, r}}) \text{ according to } (\ref{eq:protocol:enc_measurement}) \\
            \> \< \text{Sample } h \text{ from a two-universal hash function family} \\
            \> \< c = (r, i, j, h, h(w) \oplus m) \\
            \> \sendmessageleft*[6cm]{c} \< \\
            r, i, j, h, c' \leftarrow \mathsf{Unpack}(c) \> \<  \\
            w' = \left(w'_g = \prf{i_g} \oplus \prf{j_g} \right)_{g=1}^{N} \> \< \\
            m' = h(w') \oplus c' \> \<
        }
    \caption{Illustration of the proposed quantum public key encryption protocol.}
    \label{fig:protocol_enc_dec}
\end{figure*}


\section{Security Analysis}\label{sec:security}
We consider the \texttt{IND-CPA-EO} model introduced in~\cite{Barooti24:qPKE} for quantum public keys.  In particular, this definition involves an oracle who creates a challenge public key and keeps this private.  An adversary, later, sends two challenge messages to the oracle.  One of these messages is encrypted using the challenge public key and the resulting ciphertext is sent to the adversary.  The adversary must guess which message was encrypted.  In this definition, the ``\texttt{EO}'' extension signifies access to an encryption oracle with access to the challenge public key.  However, this is only relevant for public key systems which allow a quantum public key to be used for multiple encryptions, which ours does not (i.e., while there may be multiple copies of the public key, each individual copy can only be used to encrypt a single message).  Thus, we do not consider the $\texttt{EO}$ version.  See \cite{Barooti24:qPKE} for more details.

Now, in our system, the challenge public key consists of a classical part; furthermore the public key may suffer from noise (e.g., if it is created and/or stored on a NISQ device or teleported across a noisy network).  To model this, we will make two extensions to the \texttt{IND-CPA} definition from \cite{Barooti24:qPKE}; our extensions will be to the adversary's advantage (and, thus, anything proven secure using our definition will be secure in the original).  Our extensions are as follows:
\begin{enumerate}%
 \item First, when creating the challenge public key (which will be used to encrypt the challenge ciphertext), the oracle makes two identical copies, sending one to the adversary (including the classical and quantum part) while keeping the second copy private.  Note that, typically, the challenge public key is kept entirely private, so this alteration can only be to the adversary's advantage.  We make this alteration as the classical portion ensures each new public key is highly unique and so by giving the adversary a copy of the actual challenge public key, we are ensuring that security holds even if Eve manages to secure a copy of the challenge public key somehow.
 \item Second, to model noise in the public key storage, we send the challenge public key through a CPTP map $\mathcal{E}$ producing a noisy version of the public key.  When the challenge message is encrypted, it will be encrypted using this noisy version (thus, error correction information must be provided in the ciphertext which could be beneficial to the adversary).  We will assume this noise map is independent of Eve and models natural noise only.  Interesting future work may be found in the case where the adversary has some partial influence over this map (perhaps entangling with the output also).
 \end{enumerate}


With this in mind, we define the following security definitions.  As mentioned, these definitions take the \texttt{IND-CPA-EO} model introduced in~\cite{Barooti24:qPKE} and modify them by adding additional capabilities to the adversary (thus they are stronger definitions in the sense that any system which are proven secure using our definition will automatically be secure according to the definitions given in~\cite{Barooti24:qPKE}).  For the case where the noise map is independent of Eve and given by a CPTP map $\mathcal{E}$, we define the following notion of \cpanet security (note the need to specify the CPTP map) which involves the following game played between an oracle and an adversary (see also Figure \ref{fig:security:cpa-security}):

\begin{enumerate}%
\item The oracle runs $\texttt{SKGen}(\lambda)$ to produce a secret key $k$.  The oracle then runs $\texttt{PKGen}(k)$ to produce the challenge public key $(r^*,\rho k^*)$.  Given $r^*$ and $k$, the oracle produces a second copy of this challenge public key (both the classical and quantum part); this second copy is given to the adversary $\adv$.  The quantum part of the first challenge public key is sent through the given CPTP map $\mathcal{E}$ to produce a noisy version $\widetilde{\rho k}^* = \mathcal{E}(\rho k^*)$.
\item The adversary $\adv$ is given oracle access to $\texttt{PKGen}(k)$.
\item The adversary outputs two classical messages $m_0$ and $m_1$ of equal length.
\item The oracle chooses a random bit $b$ and encrypts $m_b$ using $(r^*, \widetilde{\rho k}^*)$.  Namely the oracle computes $c = \texttt{PKEnc}( (r^*,\widetilde{\rho k}^*), m_b)$.  This ciphertext (called the \emph{challenge ciphertext} is send to the adversary).
\item The adversary has continued oracle access to $\texttt{PKGen}(k)$.
\item The adversary outputs a bit $b'$.  We say the adversary \emph{wins} if $b' = b$ (in this case, the experiment outputs a $1$).
\end{enumerate}

If we denote the above experiment, for a given public key protocol $\prod$ and a specific adversary $\adv$, by $\experiment_{\prod,\adv}$.  We say a given protocol is \cpanet secure if the probability of winning is negligibly close to $1/2$.  Namely:
\begin{define}
  Let $\prod$ be a quantum public key encryption protocol with security parameter $\lambda$, and $\mathcal{E}$ be some CPTP map.  Then we say $\prod$ is \cpanet if for every quantum polynomial time adversary $\adv$, there exists a negligible function $\nu$ such that:
    $\Pr\left(\experiment_{\prod,\adv} = 1\right) \le \frac{1}{2} + \nu(\lambda).$
\end{define}

\begin{figure}
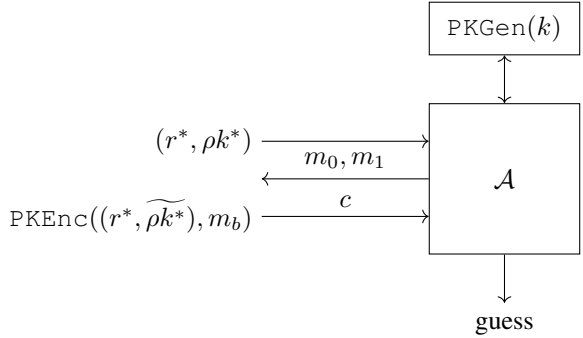

  \centering
  \indcpanoisegame 
  \caption{Showing the $\cpanet$ game.  The adversary has oracle access to $\texttt{PKGen}$ to produce as many copies of the public key as desired.  A challenge public key is also created, one copy of which is given to the adversary.  Later, the adversary outputs two classical messages and one of them is encrypted, randomly, using a noisy version of the challenge public key.  Finally, the adversary must guess which ciphertext was created.}\label{fig:security:cpa-security}
\end{figure}

We now prove that our protocol described in Section~\ref{sec:protocol} is \cpanet secure for any CPTP map $\mathcal{E}$ that satisfies the following property: given a public key input of the form in Equation \ref{eq:protocol:public_key}, if one were to perform measurements as described by the encryption POVMs (Equation \ref{eq:protocol:enc_measurement}), then the probability that the given output does not equal the expected output (namely the expected sum, modulo two, of the PRF function), is at most $Q$ for some known $Q < 1/2$ and each group is treated independently and identically by $\mathcal{E}$.

\begin{theorem}\label{thm:security}
Let $\ell \ge 1$ be the message size in bits. Then, if we set $N$ to be:
  \begin{equation}
      N = \left\lceil\max\left(0,\frac{\ell - 2\log_2\frac{5c^2}{2^p}}{1 - \log_2(1+2^{1-n}) - H(Q)}\right)\right\rceil,
  \end{equation}
   where $H(Q)$ is the binary Shannon entropy of $Q$ and where $c$ is the maximal number of public keys created (or, equivalently, requested by $\adv$),
  then our protocol, above, is $\cpanet$ secure for any CPTP map $\mathcal{E}$ described above.  In particular, for every quantum polynomial time adversary $\adv$, there exists a negligible function $\nu(\lambda)$, such that:
  \begin{equation}
  \Pr(\experiment_{\prod,\adv} = 1) \le \frac{1}{2} + \frac{c^2}{2^p} + \nu(\lambda).
  \end{equation}
  In particular, if $p = \eta\lambda$, for some constant $\eta$, and $c = poly(\lambda)$, the probability of winning is negligible close (in $\lambda$) to $1/2$.
\end{theorem}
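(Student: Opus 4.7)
The plan is to carry out a hybrid-game reduction that moves from the real \cpanet{} experiment to an ideal game in which the one-time pad $h(w)$ in the challenge ciphertext is statistically indistinguishable from a uniform $\ell$-bit string independent of the adversary's view; at that point $m_b$ is information-theoretically hidden and the adversary's success probability is exactly $1/2$.

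First I would pass to a hybrid $\mathcal{H}_1$ in which the PRF $F_k$ is replaced everywhere by a uniformly random function $f:\{0,1\}^{p+\lceil\log_2 N\rceil + n}\to\{0,1\}$, both in the challenge key and in each \texttt{PKGen} oracle response. A reduction constructs each group of each public key with a single superposition query to $U_{F_k}$ (or $U_f$) on $\frac{1}{\sqrt{2^n}}\sum_x\ket{x}\ket{-}$, extracting the phase via the standard kickback trick; any distinguishing advantage between $\mathcal{H}_0$ and $\mathcal{H}_1$ then yields a quantum PRF distinguisher, contradicting Definition~\ref{def:intro:PRF} at a cost of at most $\nu(\lambda)$. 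I would next introduce $\mathcal{H}_2$ that aborts whenever the challenge tag $r^*$ collides with one of the at most $c-1$ tags $r^{(j)}$ returned by \texttt{PKGen}; since all tags are independent uniform $p$-bit strings, a union bound gives collision probability at most $c^2/2^p$, exactly the additive slack in the theorem. Conditioned on no collision, the bits $\{f(r^*,g,\cdot)\}_{g=1}^N$ used to prepare the challenge key are fresh uniform bits, statistically independent of everything the adversary can extract from its other oracle queries.

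The quantum heart of the argument is then to lower bound the min-entropy of Bob's measurement string $w$ given Eve's remaining view. Writing $s_t = f(r^*, g, t)$, the $g$-th group of the challenge key is exactly the RR-QKD state of Equation~\ref{eq:intro:init-state-rr} with $d = 2^n$, and I would apply Theorem~\ref{thm:intro:RR-rate} group-by-group. Even though Eve keeps a clean copy of $\ket{\psi_s}$ while the encryptor measures a physically independent noisy copy $\mathcal{E}(\ket{\psi_s}\bra{\psi_s})$, a direct calculation shows that her optimal projective attack on her own copy (with $(i,j)$ known from the ciphertext) recovers $s_i\oplus s_j$ with probability only $2/d$, so her guessing probability of the noisy $w_g$ is at most $\tfrac12 + (1-2Q)/d \le \tfrac12 + 1/d$; therefore $\Hmin(W_g|I_gJ_gE) \ge 1 - \log_2(1+2^{1-n})$, matching the theorem. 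Because distinct groups use disjoint independent bits of $f$, the per-group min-entropies add: $\Hmin(W|IJE) \ge N(1 - \log_2(1+2^{1-n}))$. A near-Shannon ECC correcting a $Q$ fraction of errors leaks at most $NH(Q) + o(N)$ classical bits, so the smooth chain rule gives
\begin{equation*}
\Hmin^\epsilon(W|IJE,\text{ECC}) \ge N\bigl(1 - \log_2(1+2^{1-n}) - H(Q)\bigr) - o(N).
\end{equation*}
Finally I would invoke Equation~\ref{eq:intro:PA} with smoothing $\epsilon$ of order $c^2/2^p$ and the stated choice of $N$, making the trace distance between $(h, h(W)\oplus m_b)$ and a uniform pad at most $\approx 5c^2/2^p$; unrolling the hybrids yields the claimed $\tfrac12 + c^2/2^p + \nu(\lambda)$, with the constant $5$ absorbing the PA bound, the smoothing parameter, and the $o(N)$ syndrome slack.

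The hard part will be the quantum min-entropy step. Theorem~\ref{thm:intro:RR-rate} is conventionally stated for the RR-QKD setting in which Eve controls Bob's state, so no-cloning forces her to give up her own copy to hand Bob a noisy one. Our security definition instead provides Eve with a separate clean copy while the encryptor measures $\mathcal{E}$ applied to a physically independent copy -- a strictly more adversarial scenario that cannot be reduced to the theorem in a black-box way. I therefore have to verify, by the per-group optimal-guessing calculation sketched above, that the theorem's rate $1 - \log_2(1+2^{1-n})$ nevertheless survives unchanged; I also have to check carefully that the ECC syndrome can be folded into the smooth-min-entropy chain rule without destroying independence across the $N$ groups, and that the constants in the privacy-amplification bound fit into the advertised $5c^2/2^p$ slack.
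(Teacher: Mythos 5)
Your proposal follows essentially the same route as the paper: a hybrid replacing the PRF by a random function via a quantum distinguisher, a $c^2/2^p$ tag-collision bound, a per-group application of Theorem~\ref{thm:intro:RR-rate} giving $\Hmin(W|IJE)\ge N(1-\log_2(1+2^{1-n}))$, the $NH(Q)$ chain-rule deduction for the error-correction leakage, and privacy amplification via Equation~(\ref{eq:intro:PA}). The only cosmetic difference is that you handle tag collisions with an abort hybrid while the paper bounds the trace distance to a collision-free state $\tau$ and works with smooth min-entropy; your worry about Eve holding a clean copy is already covered by the paper's statement of Theorem~\ref{thm:intro:RR-rate}, which gives the adversary $\ket{\psi_s}$ itself and lets Bob's state be arbitrary, and the paper transfers the bound from the computed pad $W'$ to the measured pad $W$ exactly by the independence-of-bit-flips argument you sketch.
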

\begin{proof}
  We use a standard hybrid argument to prove security.  Fix an adversary $\adv$.  First, we will show that the PRF used in our construction may be replaced with a random function.  Let $\prod^{F_k}$ be the original protocol (defined in Section~\ref{sec:protocol}) and consider the following protocol $\prod^f$ which is identical to $\prod^{F_k}$ except that:
  \begin{enumerate}%
  \item $\texttt{SKGen}(\lambda)$ will choose a random function $f:\{0,1\}^p\times\{0,1\}^{\log N}\times\{0,1\}^{n} \rightarrow \{0,1\}$ uniformly at random from all possible functions.
  \item $\texttt{PKGen}(\lambda)$ will create a public key state by evaluating $f(r,g,x)$ instead of $F_k(r,g,x)$.
  \end{enumerate}

  Since $F_k$ is a quantum secure PRF according to Definition \ref{def:intro:PRF}, it is not difficult to show that the probability of winning $\prod^f$ is negligibly close to winning $\prod^{F_k}$.  Formally, consider the following distinguisher $D$ which will simulate the $\cpanet$ oracle.  First, $D$ will construct two challenge public keys by choosing $r^*$ uniformly at random from $\{0,1\}^\lambda$ and then constructing the quantum state:
  \begin{equation}
    \ket{\psi_g} = \ket{r^*,g}\sum_{x\in\{0,1\}^n}\ket{x}\ket{-},
  \end{equation}
  for $g = 1, 2, \cdots, N$.  The distinguisher will send each $\ket{\psi_g}$ to its oracle which will execute the unitary operation $U_f$ or $U_{F_k}$ as discussed in Section \ref{sec:PRF}.  In the case the oracle is performing $U_f$, the state will evolve to $\ket{\psi_g'} = \ket{r^*,g}\sum_x(-1)^{f(r^*,g,x)}\ket{x,-}$. Otherwise it will be of the form $\ket{\psi_g'} = \ket{r^*,g}\sum_x(-1)^{F_k(r^*,g,x)}\ket{x,-}$.   The distinguisher will discard the last qubit register of each state along with the initial qubits storing the ``$r^*$'' and ''$g$'' values thus creating the final quantum public key.  This is repeated twice, with the second copy being given to the adversary $\adv$.  The first copy's quantum portion is run through $\mathcal{E}$.

  The distinguisher $D$ will then run the adversary $\adv$; whenever it requests a public key the above process is run, choosing independent random $r$ values each time.  When the adversary outputs a message $m_0$ and $m_1$, the distinguisher will pick one of them (denoted $b$) to encrypt using the previously created challenge public key.  The resulting ciphertext is sent to the adversary.  When the adversary finally outputs a bit $b'$ (a guess as to which message was encrypted), the distinguisher will output $1$ only if $b' = b$ (i.e., if the adversary wins).  Since the PRF is quantum secure according to Definition \ref{def:intro:PRF}, there exists a negligible function $\nu_1(\lambda)$ such that the distinguishing advantage is less than $\nu_1(\lambda)$; hence the probability of $\adv$ winning in $\prod^{F_k}$ can be only $\nu_1(\lambda)$ different from winning in $\prod^{f}$.

  It remains to be shown that $\Pr(\experiment_{\prod^{f}, \adv}) \le \frac{1}{2} + \nu_2(\lambda)$ for some negligible function $\nu_2$.  For this, we first note that it suffices to consider an adversary that requests $c=poly(\lambda)$ public keys immediately on execution and then never calls the public key generation oracle again.  This is clear, since the public key generation oracle will run independently of any cipher text creation and simply creates public keys independently of the execution of the experiment.  Thus, if there were an advantage to requesting additional public keys in the middle of the experiment (e.g., after the adversary outputs two challenge messages), the adversary may simply request multiple public keys at the start, and save them in memory to use later as needed.  Since the adversary is bounded by $poly(\lambda)$, we can assume the adversary requests all needed public keys at the start.

  Next, we may break the adversary into two algorithms.  The first receives the challenge public key, along with $c$ additional public keys.  This algorithm outputs two messages (perhaps randomly) and also a quantum state to be used by the second stage adversary algorithm.  This second stage algorithm will receive the challenge ciphertext and the quantum state output by the first algorithm (which may contain any and all unused quantum public keys from the initial stage).  It will then output a single bit.  

  Let's consider the first stage.  The adversary receives $r^*$ along with $\ket{\psi_{f,r^*}}$ where:
  \begin{equation}
      \ket{\psi_{f,r}} = \bigotimes_{g=1}^N\sum_{x\in\{0,1\}^n}(-1)^{f(r,g,x)}\ket{x}.
  \end{equation}
  It also receives $r_1, \cdots, r_c$ along with $\ket{\psi_{f,r_1}},\cdots,\ket{\psi_{f,r_c}}$.  The challenge public key (which is a copy of $\ket{\psi_{f,r^*}}$) is kept by the oracle, but evolves to $\mathcal{E}(\kb{\psi_{f,r^*}})$ which will create a noisy public key to be used later in encryption. The adversary eventually outputs a challenge message pair $m_0$ and $m_1$ of length $\ell$.

  Consider the state of the quantum system (the joint state modeling the adversary and the oracle) after the adversary outputs a message pair.  This state, denoted $\rho$, can be written:
  \begin{align}%
    \rho=\sum_fp(f)\sum_{r^*}p(r^*)\kb{\widetilde{\psi}_{f,r^*}}\sum_{\vec{r}}p(\vec{r})\sigma_{ME}(f, r^*, \vec{r})
  \end{align}
  where, above, the first sum is over all random functions $f$ the oracle might have chosen; the second is over all challenge public key tokens $r^*$ followed by the noisy challenge public key held by the oracle denoted $\kb{\widetilde{\psi}}$ (which may not be a pure state); and
  where the last sum, over $\vec{r}$, is over all possible public key tokens $r\up{1},\cdots, r\up{c}$ which are chosen by the public key generation function.  The state $\sigma_{ME}$ denotes the messages chosen by the adversary (a classical mixed state) and Eve's quantum ancilla which includes any classical and quantum state information she needs for her second stage attack, and also includes any, and all, quantum public keys.

  Now, consider a related state $\tau$ which will be the same quantum state as $\rho$, above, except where each $r\up{y} \ne r^*$ for all $y = 1, \cdots, c$.  That is, $\tau$ is the same experiment, however where all random tokens generated are different from the challenge one.  It is not difficult to show that:
  \begin{equation}
    \frac{1}{2}\trd{\rho-\tau} \le \Pr(r_i = r^*\text{ for at least one $i$}) \le \frac{c^2}{2^p} = \epsilon_R.
  \end{equation}
  Clearly, this is negligible in $p$; if $p = \eta \lambda$, it is negligible in $\lambda$.

  Let's consider $\tau$ now.  Since $r\up{y}\ne r^*$ for all public keys, it is clear that each additional requested public key is independent of the challenge key and so cannot provide Eve with any information on the challenge ciphertext.  Furthermore, and also because of this, the value of $f(r^*,\cdot, \cdot)$, used in the creation of the challenge public key, is independent of all other evaluations of $f$ for all other keys.  Thus, we may replace $f$ with, instead, a random bit string, similar to what is done in the RR-QKD protocol (see Section \ref{sec:RRQKD}).  Finally, since each group of the challenge cipher text (particularly, the noisy version held by the adversary) is independent, it can be shown, using Theorem \ref{thm:intro:RR-rate} along with basic properties of min entropy \cite{renner2008security}, that:
  \begin{equation}
    \Hmin(W'|IJE)_\tau \ge N(1-\log(1+2^{1-n})),
  \end{equation}
  where $W'$ is the computed pad value (see Section \ref{sec:protocol}), namely it is the random variable taking values $f(r^*, g, i) \oplus f(r^*,g,j)$, or, equivalently, taking values based on the secret bit string chosen initially as in RR-QKD.
  
  Let $W$ be the random variable storing the measured pad value (see Section \ref{sec:protocol}) which is produced from the actual measurement outcome on the noisy challenge public key.  Since this is what's actually used for encryption, we need a bound on $\Hmin(W'|IJE)$ in order to use privacy amplification (Equation \ref{eq:intro:PA}).  Due to our assumptions on the noise channel, in particular since $W$ can be derived from $W'$ by adding bit-flip errors independently on all bits and independently of the adversary, it holds that $\Hmin(W|IJE)_\tau \ge \Hmin(W'|IJE)_\tau$.    Of course, an error correcting code is added to the ciphertext which is also given to the adversary.  Let $C$ be the random variable storing the error correction information.  From the chain rule of min entropy \cite{renner2008security}, while also bounding the error correction leakage by $H(Q)$, where $Q$ is the bit flip probability of the noise map we consider (see, for instance, \cite{tomamichel2012tight}), we have:
  \begin{equation}
    \Hmin(W|IJEC)_\tau \ge N(1-\log_2(1+2^{1-n})-H(Q)),
  \end{equation}

  Now, since $\rho$ and $\sigma$ are $\epsilon_R$ close in trace distance, we switch to smooth quantum min entropy (see Section \ref{sec:background}) and we have:
  \begin{equation}
    \begin{split}
        \Hmin^{2\epsilon_R}(W|IJEC)_\rho &\ge \Hmin(W|IJEC)_\tau \\ 
                                         &\ge N(1-\log_2(1+2^{1-n})-H(Q))
    \end{split}
  \end{equation}
  Thus, if we have
  \begin{equation}
    \ell \le N(1-\log_2(1+2^{1-n})-H(Q)) - 2\log\frac{1}{\epsilon_{PA} - 4\epsilon_R}
  \end{equation}
  then it will hold, using Equation \ref{eq:intro:PA}, that the measured pad system $W$ after privacy amplification, namely $h(W)$, where $h$ is a randomly chosen two-universal hash function, is $\epsilon_{PA}$-close (for $\epsilon_{PA} > 4\epsilon$) to an ideal secret key of size $\ell$ bits that is independent of the adversary system (i.e., it is $\epsilon_{PA}$ close to a OTP secret key).  Of course, the probability of winning the experiment given a OTP key is only $1/2$.  Thus, the probability of winning the experiment given a random function $f$, namely $\prod^f$, is:
  \begin{equation}
    \Pr(\experiment_{\prod^f,\adv} = 1) \le \frac{1}{2} + \frac{1}{2}\epsilon_{PA}.
  \end{equation}
  If we set $\epsilon_{PA} = 5\epsilon_R$, which is negligible in $\lambda$ (if $p = \eta\lambda$), the result follows.


\end{proof}

The above theorem shows how many groups $N$ are needed in the public key in order to be $\cpanet$ secure.  We comment that $n$ may be small while still maintaining the PRF security guarantee (which, itself, depends on the key-size $\lambda$).  There is an interesting trade-off in terms of $N$ and $n$.  For example, Figure \ref{fig:security:fig1}, shows the needed number of groups $N$, for increasing $n$ while setting $p=256$ and $c=10^5$ with $Q=5\%$ or $10\%$.  As can be seen, as $n$ increases, the size of $N$ decreases meaning that, as the total number of qubits in superposition increases, the total number of needed groups decreases.  However, if we consider the actual number of qubits (namely $nN$), we see there tends to be an optimal value for $n$ if the goal is to minimize the total number of qubits as seen in Figure \ref{fig:security:fig2}.  Our Theorem~\ref{thm:security} can be used in practice to find an optimal setting for $n$ and $N$, which balances the total size of the public key ($nN$) with the maximal allowed qubits in superposition supported by the NISQ device ($n$).  Note that each group can be created independently; however within a group, the $n$ qubits must act coherently.  Thus, at least on current NISQ devices, it is best to have a small $n$ and, importantly, we see in Figure \ref{fig:security:fig2}, that our protocol takes its optimal value for small values of $n$ ranging from three to five (at least for these noise and message length settings).  Note that this is, to our knowledge, the first quantum PKE scheme using quantum secure PRFs that can be used, securely, on today's devices as only a small number of qubits in superposition are needed along with its support for classical error correction.

\begin{figure}
  \centering
  \includegraphics[width=1.0\linewidth]{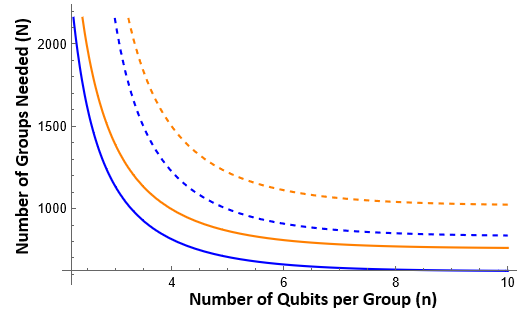}
  \caption{Showing the necessary setting for $N$ (number of groups in a single public key) according to our Theorem~\ref{thm:security} as the number of qubits in each group ($n$) increases.  Solid lines: Setting $Q=5\%$; Dashed Lines: Setting $Q = 10\%$.  Blue: Encrypting one bit ($\ell = 1$); Orange: Encrypting $\ell = 100$ bits.}\label{fig:security:fig1}
\end{figure}

\begin{figure}
  \centering
  \includegraphics[width=1.0\linewidth]{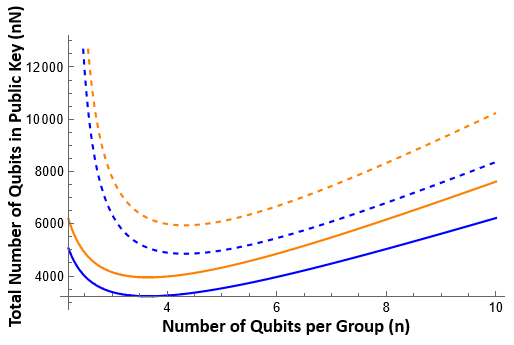}
  \caption{Similar to Figure \ref{fig:security:fig1}, but now showing the total number of qubits needed per public key ($nN$).  We see that there is an optimal setting for $n$ around three to five depending on the noise.  Note that while a large number of qubits are required, they do not have to all act together coherently; instead only $n$ qubits ($x$-axis) must act coherently.}\label{fig:security:fig2}
\end{figure}


\section{Design and Implementation} \label{sec:design}
We next present a concrete circuit design and implementation of the protocol discussed in Section~\ref{sec:protocol}. It is for the case $n=3$ (i.e., the dimension of the quantum state in each round is $2^3$) for both efficiency and practicality.  



For a $3$-qubit implementation of the protocol, our quantum PRF
family is then composed of functions $F_k(r,g,\cdot):\mathbb{F}_2^3 \to \mathbb{F}_2$. As the public key $\ket\psi = \frac{1}{\sqrt{8}} \sum_x (-1)^{F_k(r,g,x)} \ket{x}$ is unchanged (specifically changes by a global phase) if we replace $F_k(r,g,\cdot) \mapsto F_k(r,g,\cdot) \oplus 1$, two functions are equivalent if they differ by a constant. Therefore, there are only $128$ nonequivalent functions to choose from. We parameterize our functions as
\begin{equation}
    \begin{split}
        F_\ell(x)  &= \ell_0 x_0 +  \ell_1 x_1 +  \ell_2 x_2 +  \ell_3 x_0 x_1 \\
                &\quad +  \ell_4 x_0 x_2 +  \ell_5 x_1 x_2 +  \ell_6 x_0 x_1 x_2, 
    \end{split}
\end{equation}
where each $\ell_j = \ell_j(k,r,g)$. For modeling purposes, it suffices to select a function by sampling a ``key'' $\ell \stackrel{\$}{\leftarrow} \mathbb{F}_2^7$. We show the circuit for implementing $F_\ell$ in (\ref{eqn:public-key-circuit}) below, where each boxed operation is only applied if the associated key bit ($\ell_j$) is set. That is, each operation is classically controlled on the labeled key bit.
\begin{equation}\label{eqn:public-key-circuit}
    \prfcircuit
\end{equation}

\begin{subsection}{Implementing Quantum PKE Encryption} \label{subsection:implementing_encryption}
    In the following,  we first treat the general case and then turn to the case when $n = 3$. Our public key (omitting normalization) is of the form
    \[
        \ket{\psi_{k, r}} = \bigotimes_{g=1}^{N} \sum_{x \in \{0, 1\}^n} { (-1)^{\prf{x}} \ket{x} }.
    \]
    We then draw $\delta$ uniformly from the set $\{1, \cdots, 2^{n} - 1\}$. Once $\delta$ is fixed, the set $\{0, \cdots, 2^{n}-1\}$ has $2^{n-1}$ distinct pairs $(i_t, j_t)$ such that $j_t - i_t = \delta \pmod{2^n}$, for $t \in \{0, \cdots, 2^{n-1}-1\}$. The process of obtaining the bit $w_g$ for each group $g$ in our public key is similar and consists of three steps:
    \begin{enumerate}
        \item Choose an index $b \in \{0, \cdots, n-1\}$. Apply the unitary $U_\pi(b, \delta)$ that implements the permutation $\pi$ that we describe below.
        \item Apply the Hadamard gate to qubit ``$b$". 
        \item Measure in the computational basis to obtain value $\pi(i_t)$ or $\pi(j_t)$ for some $t$. This determines the value of $w_g$. Use the inverse permutation $\pi^{-1}$ to obtain $(i_t, j_t)$.
    \end{enumerate}

    We begin with the first step. The index $b$ is chosen from $\{0, \cdots, n-1\}$. Let $\mathsf{int}(x)$ be the integer representation of the $n$-bit binary number $x=x_0\cdots x_{n-1}$. Consider the permutation $\pi$ that maps pairs $(i_t, j_t)$ in the following way:
    \begin{align} \label{eqn:permutation_binary}
        \begin{split}
            i_t \rightarrow \mathsf{int}(x_0 \cdots x_{b-1} 0 x_{b+1} \cdots x_{n-1})\\
            j_t \rightarrow \mathsf{int}(x_0 \cdots x_{b-1} 1 x_{b+1} \cdots x_{n-1})
        \end{split}
    \end{align}
    i.e., it maps every $(i_t, j_t)$ pair to unique integers with $0$ and $1$ respectively in bit position $b$. Let $U_{\pi}(b, \delta) $ be the unitary that implements this permutation. Using this notation, the public key for a fixed group $g$ after applying the permutation unitary can be written as $ U_\pi(b, \delta) \ket{\psi_{k, g}}=$
    \begin{align*}
        \sum_{t}
                \Big(
                    (-1)^{\prf{i_t}} &\ket{x_0, \cdots, x_{a-1}, 0, x_{a+1}, \cdots, x_{n-1}} \\
                +   (-1)^{\prf{j_t}} &\ket{x_0, \cdots, x_{a-1}, 1, x_{a+1}, \cdots, x_{n-1}}
                \Big)
    \end{align*}
    Then as per step 2,
    applying the Hadamard on qubit $b$ transforms this state into
    \begin{align*}
            \sum_{t} 
                \Big(
                    \left[(-1)^{\prf{i_t}} + (-1)^{\prf{j_t}}\right] &\ket{\pi(i_t)} \\
            +       \left[(-1)^{\prf{i_t}} - (-1)^{\prf{j_t}}\right] &\ket{\pi(j_t)}
                \Big)
    \end{align*}
    The third step is to measure the state in the computational basis. Note that if $\prf{i_t} \oplus \prf{j_t} = 0$, $\left|(-1)^{\prf{i_t}} + (-1)^{\prf{j_t}}\right| = 2$ and the measurement yields outcome $\pi(i_t)$ with probability $1/2^{n-1}$. Similarly if $\prf{i_t} \oplus \prf{j_t} = 1$, the outcome is $\pi(j_t)$ with probability $1/2^{n-1}$. Whether we measure $\pi(i_t)$ or $\pi(j_t)$ determines the value of our secret bit $w_g = \prf{i_t} \oplus \prf{j_t}$, and the corresponding $(i_t, j_t)$ values can be obtained by computing the inverse permutation $\pi^{-1}$. 

    There are many choices for the permutation $\pi$ once $b$ is fixed, as is evident from (\ref{eqn:permutation_binary}). There are $2^{n-1}$ different pairs that $(i_0, j_0)$ can map to. Once the mapping for $(i_0, j_0)$ is fixed, there are $2^{n - 1} - 1$ choices for $(i_1, j_1)$ and so on. Furthermore, within each pair, swapping the $i_t$ and $j_t$ does not make a difference. This results in $2^{n}!/2$ different permutations that can be used to pick a single pair $(i, j)$ uniformly at random with the property that $j - i = \delta \pmod{2^n}$.
    
    For $n=3$, $\delta$ is chosen uniformly from the set $\{1, \cdots, 7\}$. The case when $\delta=1, 2$ and $4$ is quite simple. We take $U_\pi(b,\delta) = \iden$ and take $b=0, 1$ and $2$ respectively. The other cases are non-trivial and we use the permutations given in (\ref{eqn:permutation_matrices}) below. For those, we take $b=0$ and apply the Hadamard gate to qubit $0$. This has the nice property that every $(i_t, j_t)$ pair is mapped to two consecutive integers and the inverse permutation mapping back to the $(i_t, j_t)$ pair follows nicely based on whether the measured value is even or odd. Concretely, for measurement outcome $x$, if $x$ is even, then:
        $(i, j)_g = (\pi^{-1}(x), \pi^{-1}(x + 1))$ and $w_g = \prf{i} \oplus \prf{j} = 0.$
    Otherwise, if $x$ is odd:
        $(i, j)_g = (\pi^{-1}(x - 1), \pi^{-1}(x))$, and $w_g = \prf{i} \oplus \prf{j} = 1.$
    

    \begin{align} 
        \begin{split} \label{eqn:permutation_matrices}
            \delta = 3: &   \begin{pmatrix}
                             0 & 3 & 1 & 6 & 2 & 5 & 4 & 7 \\
                             0 & 1 & 2 & 3 & 4 & 5 & 6 & 7
                        \end{pmatrix} \\
            \delta = 5: &   \begin{pmatrix}
                             0 & 5 & 1 & 4 & 2 & 7 & 3 & 6 \\
                             0 & 1 & 2 & 3 & 4 & 5 & 6 & 7
                        \end{pmatrix} \\
            \delta = 6: &   \begin{pmatrix}
                             0 & 6 & 1 & 7 & 2 & 4 & 3 & 5 \\
                             0 & 1 & 2 & 3 & 4 & 5 & 6 & 7 
                        \end{pmatrix} \\
            \delta = 7: &   \begin{pmatrix}
                             0 & 7 & 1 & 2 & 3 & 4 & 5 & 6 \\
                             0 & 1 & 2 & 3 & 4 & 5 & 6 & 7
                        \end{pmatrix}
        \end{split}
    \end{align}
    
    The overall encryption circuits corresponding to each $\delta$ in (\ref{eqn:permutation_matrices}) are shown in \figurename~\ref{fig:circ_all_deltas}.


    To summarize, the overall circuit (corresponding to the protocol in Figure~\ref{fig:protocol_enc_dec}) is as follows. Alice performs the circuit described in (\ref{eqn:public-key-circuit}) to generate the public key.  Bob picks a random $\delta$ and performs the corresponding encryption unitary based on $\delta$ (as described earlier, simple circuits when $\delta=0$, 1, or 2, and see Figure~\ref{fig:circ_all_deltas} for $\delta=3$, 5, 6 or 7) to obtain secret bits
    $w = w_0\cdots w_{n-1}$. After that, 
    Bob computes the ciphertext classically by first sampling a hash function $h$ from a two universal hash family, and then computing $h(w) \oplus m$, where $m$ is the message we want to encrypt.

    \begin{figure}[!t]
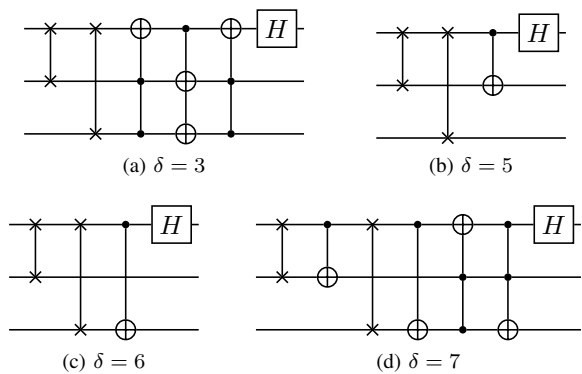

        \centering
        \subfloat[$\delta = 3$]{\dthree \label{fig:circ_delta_3}} \hfil
        \subfloat[$\delta = 5$]{\dfive  \label{fig:circ_delta_5}} \hfil
        \subfloat[$\delta = 6$]{\dsix   \label{fig:circ_delta_6}} \hfil
        \subfloat[$\delta = 7$]{\dseven \label{fig:circ_delta_7}}
        \caption{Encryption circuits for different values of $\delta=3$, 5, 6 or 7. The circuits implement $U_\pi(0,\delta)$ followed by a Hadamard gate on qubit $0$.}
        \label{fig:circ_all_deltas}
    \end{figure}
    
\end{subsection}

\section{Evaluation} \label{sec:eval}

We tested our implementation on Quantinuum's H1-1 trapped-ion device using Microsoft Azure Quantum \cite{quantinuum2025, azure_quantum}.  For simplicity, we considered a very basic function for our PRF of $F_k(x) = x_1$.  Constructing a circuit to implement an actual PRF $F_k(r,g,\cdot)$, for some $k$, $r$, and $g$, can be done by constructing a suitable circuit based on a look-up table given the output of a quantum secure PRF.  For our evaluation, we use a simple function just to test our measurement circuit implementation.  Our software is available online \cite{sourcecode}.

Our implementation allows for the creation of $G$ groups per trial run (thus using $4G$ qubits total since $n=3$ and we require one additional qubit per group to implement the PRF oracle); we tested $G=1$, $G=3$ (resulting in $12$ qubits used total) and $G=4$ (resulting in $16$ qubits total).  On each trial, we created $G$ public key groups, performed a measurement using the process described in Section~\ref{sec:design}, choosing different $\delta$ values for each of the $G$ groups, and repeated this process until we had at least $N=100$ total groups and thus our bit strings $W$ and $W'$ were at least 100 bits long.  We then computed the error in the bit strings (with $W'$ being the resulting string from the PRF function, and $W$ being the observed outcome).  The total observed noise is shown in Table~\ref{tab:eval:H1-1} along with the needed number of groups for security $N$ according to our Theorem~\ref{thm:security}.  Note that the noise increases with $G$, which is consistent with 
the interconnect structure of the Quantinuum computer.

\begin{table}
\caption{Average noise on the Quantinuum H1-1 trapped-ion device.  Also computing the minimum number of groups needed to encrypt eight bit messages, using our Theorem~\ref{thm:security} and the observed bit error rate, assuming $p = 128$ and $c = 10^5$.}\label{tab:eval:H1-1}
  \begin{tabular}{c|c|l}
    Number of groups per trial & Average error & Required Size $N$ \\
    \hline
    $G = 1$ & $1.65\%$ Error & $N=794$ Groups\\
    $G = 3$ & $8.89\%$ Error & $N=1802$ Groups\\
    $G = 4$ & $11.54\%$ Error & $N=2727$ Groups
  \end{tabular}
\end{table}

Looking forward to larger implementations, we can also calculate the expected error rate of our circuits.  As our circuits are small, we can perform an exact error analysis, using the methodology of \cite{kukliansky2025quantum}. In this formalism each classical or quantum operation becomes ``circuit enumerator'' whose variables symbolically track the effect of errors. For simplicity we use the so-called circuit error model, where:
    (1) each unitary operation (including the identity when a qubit is idling) suffers a post-applied random Pauli error with a probability $\epsilon$; and
    (2) each measurement suffers from a readout assignment error (bit-flip error) with probability $\rho$.
We assume classical operations are error-free.

The result is a tensor representation of the circuit that includes all possible ways that errors could affect the result. From this we can model error statistics; for our purpose we compute the likelihood of a decryption failure as a function of $\ell$, $\delta$, $\epsilon$, and $\rho$. As $\ell$ is well-modeled as uniformly distributed over its $128$ possible values, we can compute the expectation of a decryption failure for each $\delta$ as a function of the error rates $\epsilon$ and $\rho$. The truncated results are shown in Table~\ref{table:error-modeling} which can be used to estimate the error at larger scales.

 \begin{table}[!t]
\caption{Expected probability of decryption failure.}\label{table:error-modeling}
 \begin{center}
    \renewcommand{\arraystretch}{1.4}
    \begin{tabular}{c|l}
    $\delta$ & Probability of decryption failure\\\hline
    $1$ & $3.5\epsilon + 2\rho - 10.8\epsilon^2 - 14.4\epsilon\rho - 2.5\rho^2 + O(\epsilon,\rho)^3$\\
    $2$ & $3.4\epsilon + 2\rho - 10.8\epsilon^2 - 14.4\epsilon\rho - 2.5\rho^2 + O(\epsilon,\rho)^3$\\
    $3$ & $8.0\epsilon + 2\rho - 60.4\epsilon^2 - 31.9\epsilon\rho - 2.5\rho^2 + O(\epsilon,\rho)^3$\\
    $4$ & $3.3\epsilon + 2\rho -  9.3\epsilon^2 - 13.6\epsilon\rho - 2.5\rho^2 + O(\epsilon,\rho)^3$\\
    $5$ & $5.8\epsilon + 2\rho - 30.7\epsilon^2 - 23.2\epsilon\rho - 2.5\rho^2 + O(\epsilon,\rho)^3$\\
    $6$ & $5.8\epsilon + 2\rho - 30.5\epsilon^2 - 23.3\epsilon\rho - 2.5\rho^2 + O(\epsilon,\rho)^3$\\
    $7$ & $8.0\epsilon + 2\rho - 60.4\epsilon^2 - 31.9\epsilon\rho - 2.5\rho^2 + O(\epsilon,\rho)^3$\\
    \end{tabular}\end{center}
\end{table}


\section{Conclusion}
In this paper, 
we have designed a practical quantum PKE
scheme that accounts for the constraints of current NISQ devices. Specifically, our design only requires a small number of qubits to be in superposition, and is tolerant to the noise of current NISQ devices.
We have designed and
implemented circuits for our PKE scheme and tested them on current quantum computers,
evaluating the error rate and efficiency of the schemes.   Many interesting future problems remain.  Our design, based on the RR-QKD protocol, may serve as a template for other quantum cryptographic primitives beyond encryption.  It would also be interesting to try and extend our construction to be CCA secure as defined in \cite{Barooti24:qPKE} for quantum PKE.  Finally, it may be interesting to extend our security model, which currently takes into account natural noise, to the case of adversarial noise.  For instance, we may model public key distribution over a noisy network, where the network is controlled by the adversary. We suspect our construction would be secure in this case, though an exact security definition, and proof, we leave as future work.

\balance
\bibliographystyle{IEEEtran}
\bibliography{IEEEabrv,all,walter}

\end{document}